\documentclass[aps,prd,twocolumn,superscriptaddress,nofootinbib]{revtex4-2}

\usepackage{amsmath,amssymb,amsthm}
\usepackage{bm}
\usepackage{graphicx}
\usepackage{hyperref}

\newtheorem{theorem}{Theorem}
\newtheorem{corollary}{Corollary}

\begin{document}

\title{Lorentz--FitzGerald Contraction as the Unique Closure Condition\\
for Moving Spherical-Harmonic Cavities}

\author{Shiva Meucci}

\date{April 29, 2026}

\begin{abstract}
We prove that the Lorentz--FitzGerald contraction is the unique deformation of
a resonant cavity moving through a mechanical wave medium that preserves
spherical-harmonic phase closure.  For a cavity moving at speed $v = \beta c$
through a medium supporting nondispersive wave propagation at speed~$c$, the
round-trip phase of an internal ray at angle~$\theta$ to the motion depends on
the boundary radius~$r(\theta)$ according to
$\Phi(\theta) = 2k\,r(\theta)\sqrt{1-\beta^2\sin^2\theta}/(1-\beta^2)$.
Requiring $\Phi(\theta)$ to be independent of~$\theta$---the necessary
condition for retaining a spherical-harmonic eigenstructure---uniquely fixes
the Lorentzian aspect ratio
\[
\frac{a_\parallel}{a_\perp} = \frac{1}{\gamma}
= \sqrt{1-\beta^2}.
\]
Substituting this unique boundary into the round-trip time yields the resonant
period dilation $T = \gamma T_0$, without additional assumptions.  Both
results---contraction and dilation---follow from a single mechanical
constraint: preservation of eigenstructure under motion.  This is the missing
uniqueness theorem of the constructive relativity program initiated by
FitzGerald, Lorentz, and Heaviside: the proof that Lorentzian kinematics are
not merely consistent with, but uniquely required by, phase closure in a
mechanical wave medium.
\end{abstract}

\maketitle

\section{Introduction}
\label{sec:intro}

In 1889, FitzGerald proposed that a body moving through the luminiferous
medium contracts along the direction of motion by the
factor~$\sqrt{1-v^2/c^2}$~\cite{FitzGerald1889}.  Lorentz independently
developed the same hypothesis as a consequence of the electromagnetic forces
binding molecular matter~\cite{Lorentz1892,Lorentz1904}.  In parallel,
Heaviside showed that the field and equilibrium surfaces of charges
moving uniformly through the electromagnetic medium are oblate
spheroids~\cite{Heaviside1892}, arriving at Lorentzian geometry from the
structure of moving systems in a dielectric, not from a spacetime postulate.

These results belonged to a constructive program: deriving Lorentzian
kinematics from the mechanical or electromagnetic properties of bodies in
motion through a mechanical wave medium.  Einstein's 1905 postulational approach
rendered the constructive question seemingly unnecessary by elevating Lorentz
covariance to a principle, and Minkowski's subsequent geometric
reinterpretation made the medium-based picture appear superfluous.  The
constructive program was not refuted; it was abandoned.

Bell reopened the question in 1976, arguing that the Lorentz contraction of a
moving rod could in principle be derived from the dynamics of the forces
binding its constituents, and that this approach was pedagogically and
conceptually valuable~\cite{Bell1976}.  Brown later gave the constructive
program a systematic philosophical treatment~\cite{Brown2005}.  But neither
Bell nor the broader constructive tradition ever produced the clean uniqueness
result that the program needed: a proof that a specific deformation is not
merely \emph{consistent} with motion through a mechanical wave medium, but
\emph{uniquely required} by it under stated conditions.

This paper supplies that theorem.  We consider a resonant cavity moving
uniformly through an inviscid mechanical wave medium in which waves propagate isotropically at
speed~$c$.  A standing wave is not preserved by carrying material points along
a trajectory; it is preserved by maintaining phase closure.  When a cavity
moves through such a medium, longitudinal and transverse ray paths are affected
differently, and a spherical boundary acquires direction-dependent round-trip
phase.  Since spherical harmonics require angularly coherent closure, this
destroys the original eigenstructure.

The central question is therefore:
\begin{quote}
What is the unique medium-frame shape of a uniformly moving resonant cavity
that preserves the angle-independent two-way phase closure required for
spherical-harmonic standing modes?
\end{quote}

We prove that the answer is an oblate ellipsoid whose aspect ratio is the
Lorentz--FitzGerald contraction factor $1/\gamma$.  The contraction is not
assumed as a transformation law; it is derived as the unique deformation that
preserves the cavity's eigenstructure.  The resonant period dilation
$T = \gamma T_0$ then follows from the same closure condition, not as an
independent postulate.

The logical chain is:
\begin{align}
\label{eq:chain}
&\text{mechanical wave medium}
\;\to\;
\text{pursuit geometry}
\notag \\
&\;\to\;
\text{phase closure}
\;\to\;
\text{unique contracted shape}
\notag \\
&\;\to\;
\text{period dilation}.
\end{align}

The argument is framed as a conditional theorem about wave mechanics, with
assumptions stated explicitly and no claim made beyond what the proof supports.
It is, in a precise sense, the missing theorem of the constructive relativity
program: the uniqueness result that was always available but was never
extracted before the program was set aside.

\section{Stationary Cavities and Spherical Harmonics}
\label{sec:stationary}

Let an inviscid mechanical wave medium support nondispersive scalar waves of speed~$c$.  A
spherical cavity of radius~$R_0$ at rest in the medium supports standing modes
\begin{equation}
\Psi_{\ell m n}(r,\theta,\phi,t)
= j_\ell(k_{\ell n} r)\, Y_\ell^m(\theta,\phi)\, e^{-i\omega t},
\end{equation}
where the radial wavenumbers~$k_{\ell n}$ are fixed by the boundary condition
at~$r = R_0$.

The separation into radial and angular factors depends on the boundary being a
surface of constant~$r$.  In the eikonal limit, the same structure can be read
off from ray phase: a diametral round-trip path of length~$2R_0$ accumulates
phase
\begin{equation}
\Phi_0 = 2 k R_0,
\end{equation}
independently of direction.  This angular uniformity is the geometric condition
that selects $Y_\ell^m$ as the angular eigenbasis.  If the round-trip phase
acquired directional dependence, the spherical symmetry of the closure
condition would be broken and the $Y_\ell^m$ eigenstructure lost.

\section{Moving-Boundary Pursuit Problem}
\label{sec:pursuit}

Now let the cavity move uniformly through the medium at velocity
$\mathbf{v} = v\,\hat{\mathbf{x}}$, where $v = \beta c$ with $0 < \beta < 1$.
The wave speed remains~$c$ in the medium frame; it is a property of the
medium, not the cavity.

Let~$\theta$ denote the angle between an internal ray direction
$\hat{\mathbf{n}}$ and the velocity~$\hat{\mathbf{x}}$.  Let the medium-frame
boundary of the moving cavity be described by a radial function $r(\theta)$
from the cavity center, axially symmetric about
$\hat{\mathbf{x}}$.

\subsection{Outward leg}

A wave emitted from the cavity center must reach a boundary point that is
simultaneously translating at~$v$ in the $x$-direction.  At time~$\Delta t$
after emission, the boundary point has moved to
$r(\theta)\hat{\mathbf{n}} + v\Delta t\,\hat{\mathbf{x}}$, and the wavefront
has traveled a distance $c\Delta t$ through the medium.  The arrival condition
is
\begin{equation}
c^2 \Delta t^2
= r(\theta)^2 + 2\,r(\theta)\,v\,\Delta t\cos\theta + v^2\Delta t^2.
\end{equation}

Rearranging:
\begin{equation}
(c^2 - v^2)\,\Delta t^2
- 2\,r(\theta)\,v\cos\theta\;\Delta t
- r(\theta)^2 = 0.
\end{equation}

The positive root is
\begin{equation}
\label{eq:dt_out}
\Delta t_{\mathrm{out}}
= \frac{r(\theta)\bigl[\beta\cos\theta
  + \sqrt{1 - \beta^2\sin^2\theta}\,\bigr]}{c(1 - \beta^2)}.
\end{equation}

\subsection{Return leg}

On the return, the wave propagates from the boundary back toward the center,
which is itself moving at~$v$.  The geometry is identical with
$\beta \to -\beta$ in the directional term:
\begin{equation}
\label{eq:dt_ret}
\Delta t_{\mathrm{ret}}
= \frac{r(\theta)\bigl[-\beta\cos\theta
  + \sqrt{1 - \beta^2\sin^2\theta}\,\bigr]}{c(1 - \beta^2)}.
\end{equation}

\subsection{Round-trip phase}

Adding Eqs.~\eqref{eq:dt_out} and~\eqref{eq:dt_ret}, the
$\beta\cos\theta$ terms cancel:
\begin{equation}
\label{eq:dt_rt}
\Delta t_{\mathrm{rt}}(\theta)
= \frac{2\,r(\theta)\sqrt{1 - \beta^2\sin^2\theta}}
       {c(1 - \beta^2)}.
\end{equation}

The phase accumulated by a nondispersive wave ($\omega = ck$) on the round
trip is $\Phi = \omega\,\Delta t_{\mathrm{rt}} = c k\,\Delta t_{\mathrm{rt}}$:
\begin{equation}
\label{eq:phase}
\boxed{
\Phi(\theta)
= \frac{2k\,r(\theta)\sqrt{1 - \beta^2\sin^2\theta}}{1 - \beta^2}.
}
\end{equation}

\section{The Closure Condition}
\label{sec:closure}

Preserving the spherical-harmonic standing-mode structure requires the
round-trip phase to be independent of~$\theta$:
\begin{equation}
\label{eq:closure}
\Phi(\theta) = \Phi_*,
\quad\forall\;\theta,
\end{equation}
where $\Phi_*$ is a constant.  Substituting~\eqref{eq:phase}:
\begin{equation}
\frac{2k\,r(\theta)\sqrt{1 - \beta^2\sin^2\theta}}{1 - \beta^2}
= \Phi_*.
\end{equation}

Solving for $r(\theta)$:
\begin{equation}
\label{eq:r_general}
r(\theta) = \frac{\Phi_*(1 - \beta^2)}{2k}
\cdot \frac{1}{\sqrt{1 - \beta^2\sin^2\theta}}.
\end{equation}

The angular dependence of $r(\theta)$ is now completely fixed.  No free angular
function remains.  The boundary shape is determined up to a single overall
scale constant $\Phi_*/(2k)$.

\section{Uniqueness Theorem}
\label{sec:theorem}

We now state the central result in two stages: first the aspect ratio, which
is fixed by the closure condition alone, and then the absolute scale, which
requires an additional physical input.

\begin{theorem}[Unique closure-preserving aspect ratio]
\label{thm:main}
Let an inviscid mechanical wave medium support nondispersive waves propagating
isotropically at speed~$c$ in the medium rest frame.
Let a cavity move uniformly through the medium at speed $v = \beta c < c$ along
a fixed axis.  The boundary, being a continuous deformation of a sphere in a
transversely isotropic medium, is axially symmetric about the direction of
motion and can be described by a radial function~$r(\theta)$ from the cavity
center.  Assume that the round-trip
phase $\Phi(\theta)$ is independent of~$\theta$ (spherical-harmonic closure).

Then the boundary radius is fixed up to an overall transverse
scale~$a_\perp(\beta)$ by
\begin{equation}
\label{eq:r_unique}
\boxed{
r(\theta)
= \frac{a_\perp\sqrt{1 - \beta^2}}
       {\sqrt{1 - \beta^2\sin^2\theta}},
}
\end{equation}
where $a_\perp \equiv r(\pi/2)$.  Equivalently, the boundary is an ellipsoid
of revolution
\begin{equation}
\frac{x^2}{(a_\perp/\gamma)^2}
+ \frac{y^2 + z^2}{a_\perp^2} = 1,
\end{equation}
with longitudinal semiaxis
\begin{equation}
a_\parallel = \frac{a_\perp}{\gamma}.
\end{equation}

Thus the closure condition uniquely fixes the Lorentzian aspect ratio:
\begin{equation}
\label{eq:ratio}
\boxed{
\frac{a_\parallel}{a_\perp} = \frac{1}{\gamma}
= \sqrt{1 - \beta^2}.
}
\end{equation}
\end{theorem}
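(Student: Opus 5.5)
The proof starts from the round-trip phase formula \eqref{eq:phase} and the general solution \eqref{eq:r_general} of the closure condition \eqref{eq:closure}. The theorem only requires three things: re-expressing the free constant in terms of the transverse radius, recognizing the resulting curve as a conic, and reading off the semiaxes.

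\emph{Fixing the scale.} Evaluating \eqref{eq:r_general} at $\theta=\pi/2$, where $\sin\theta=1$, gives
\begin{equation}
a_\perp \equiv r(\pi/2) = \frac{\Phi_*(1-\beta^2)}{2k}\cdot\frac{1}{\sqrt{1-\beta^2}} = \frac{\Phi_*\sqrt{1-\beta^2}}{2k}.
\end{equation}
Eliminating $\Phi_*/(2k)$ in favor of $a_\perp$ yields \eqref{eq:r_unique} directly. Uniqueness is immediate: \eqref{eq:phase} is linear in $r(\theta)$ with a strictly positive angular coefficient, since $0<\beta<1$. Prescribing the value $\Phi_*$ therefore determines $r(\theta)$ pointwise, and the only freedom left is the single constant $\Phi_*$, equivalently $a_\perp$. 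Axial symmetry justifies treating $r$ as a function of $\theta$ alone.

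\emph{Identifying the ellipsoid.} In a meridian plane, write $x=r\cos\theta$ and $\rho=r\sin\theta$ with $\rho^2=y^2+z^2$. Squaring \eqref{eq:r_unique} and using $1-\beta^2\sin^2\theta=\cos^2\theta+(1-\beta^2)\sin^2\theta$ gives
\begin{equation}
r^2\cos^2\theta + (1-\beta^2)\, r^2\sin^2\theta = a_\perp^2(1-\beta^2).
\end{equation}
Dividing by $a_\perp^2(1-\beta^2)$ gives
\begin{equation}
\frac{x^2}{a_\perp^2(1-\beta^2)}+\frac{\rho^2}{a_\perp^2}=1.
\end{equation}
This is the stated ellipsoid of revolution with $a_\parallel=a_\perp\sqrt{1-\beta^2}=a_\perp/\gamma$. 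As a consistency check, $\theta=0$ in \eqref{eq:r_unique} gives $r(0)=a_\perp\sqrt{1-\beta^2}$ directly, in agreement with the ellipse. Taking the ratio of semiaxes then gives \eqref{eq:ratio}.

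\emph{Main obstacle.} The algebra is routine; the delicate point is interpretive. The surface described by $r(\theta)$, with $\theta$ the angle of the ray direction $\hat{\mathbf n}$, must be the same surface as the boundary in Cartesian coordinates centered on the cavity center at the emission instant. This holds because the outward-leg setup in Sec.~\ref{sec:pursuit} defines $r(\theta)$ as the rigid medium-frame boundary shape translated by $v\Delta t$. The relation is then a genuine polar equation of a fixed surface, so the conversion to Cartesian form is legitimate. I would state this identification explicitly before the conversion step.
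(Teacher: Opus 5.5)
Your proposal is correct and follows the paper's proof: you fix the single constant in Eq.~\eqref{eq:r_general} by evaluating at $\theta=\pi/2$, substitute $\Phi_*/(2k)=\gamma a_\perp$ back, and read off $a_\parallel=r(0)=a_\perp/\gamma$. Two of your steps go beyond the paper, which asserts the ellipsoid form without verifying it: the explicit Cartesian check via $1-\beta^2\sin^2\theta=\cos^2\theta+(1-\beta^2)\sin^2\theta$, and the remark that $\theta$ is the polar angle of the boundary point relative to the moving center (not the actual medium-frame ray direction).
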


\begin{proof}
From the pursuit calculation (Sec.~\ref{sec:pursuit}), the round-trip
phase for a ray at angle~$\theta$ is given by Eq.~\eqref{eq:phase}.
Imposing $\Phi(\theta) = \Phi_*$ determines $r(\theta)$ up to the single
constant $\Phi_*/(2k)$, as in Eq.~\eqref{eq:r_general}.  Evaluating at
$\theta = \pi/2$ defines the transverse scale:
\begin{equation}
a_\perp \equiv r(\pi/2)
= \frac{\Phi_*(1 - \beta^2)}{2k\sqrt{1 - \beta^2}}
= \frac{\Phi_*}{2k}\sqrt{1 - \beta^2}.
\end{equation}
Thus $\Phi_*/(2k) = a_\perp / \sqrt{1-\beta^2} = a_\perp \gamma$.
Substituting back into~\eqref{eq:r_general}:
\begin{equation}
r(\theta)
= \frac{a_\perp\sqrt{1-\beta^2}}{\sqrt{1-\beta^2\sin^2\theta}}.
\end{equation}
The longitudinal semiaxis is $a_\parallel = r(0) = a_\perp\sqrt{1-\beta^2}
= a_\perp/\gamma$.
No free parameters or free angular functions remain.
\end{proof}

\subsection{Transverse scale invariance}

The closure theorem fixes the shape but not the absolute size.  A separate
physical argument fixes the transverse scale.

The medium is isotropic in the plane transverse to the motion.  It therefore
cannot distinguish between two identical cavities moving at the same
speed~$v$ but in opposite longitudinal directions: both present the same
transverse cross-section to the same medium.  Any transverse
rescaling must therefore be a universal function $s(\beta)$ of speed alone,
identical for all such cavities.

At $\beta = 0$ the cavity is at rest and spherical, so $s(0) = 1$.  The
function $s(\beta)$ must be continuous (a discontinuous rescaling of a
resonant eigenstructure would destroy the modes it is supposed to preserve).
No physical mechanism within the isotropic transverse plane can select a
nontrivial $s(\beta) \neq 1$: the transverse medium properties are unaffected
by longitudinal translation.

We therefore set
\begin{equation}
\label{eq:transverse}
a_\perp = R_0,
\end{equation}
giving the full Lorentz--FitzGerald contraction:
\begin{equation}
\label{eq:semiaxes}
\boxed{
a_\parallel = \frac{R_0}{\gamma} = R_0\sqrt{1 - \beta^2},
\qquad
a_\perp = R_0.
}
\end{equation}

It is worth emphasizing that the Lorentzian aspect
ratio~\eqref{eq:ratio}---the central result of the theorem---holds
independently of the transverse normalization.  The choice $a_\perp = R_0$
determines the absolute scale; the shape anisotropy $a_\parallel/a_\perp =
1/\gamma$ is a purely mathematical consequence of phase closure.

\section{Resonant Period Dilation}
\label{sec:dilation}

The round-trip time~\eqref{eq:dt_rt}, evaluated on the unique closure-preserving
boundary~\eqref{eq:r_unique} with $a_\perp = R_0$, becomes
\begin{align}
\Delta t_{\mathrm{rt}}
&= \frac{2}{c(1 - \beta^2)}\cdot
   \frac{R_0\sqrt{1 - \beta^2}}
        {\sqrt{1 - \beta^2\sin^2\theta}}
   \cdot\sqrt{1 - \beta^2\sin^2\theta}
\notag \\[4pt]
&= \frac{2R_0}{c\sqrt{1 - \beta^2}}.
\end{align}

The $\theta$-dependent factors cancel---as they must, since $\Phi(\theta) =
\Phi_*$ was imposed.  The round-trip time is angle-independent and equal to
\begin{equation}
\label{eq:Trt}
T = \frac{2R_0}{c}\,\gamma.
\end{equation}

Since the rest-frame round-trip period is $T_0 = 2R_0/c$:

\begin{corollary}[Resonant period dilation]
\label{cor:dilation}
Under the assumptions of Theorem~\ref{thm:main} with $a_\perp = R_0$, the
resonant period of the moving cavity is
\begin{equation}
\label{eq:dilation}
\boxed{T = \gamma\, T_0,}
\end{equation}
and the resonant frequency is
\begin{equation}
\label{eq:omega}
\boxed{\omega = \frac{\omega_0}{\gamma}.}
\end{equation}
\end{corollary}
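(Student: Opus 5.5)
The plan is to read off the corollary directly from the round-trip time formula, Eq.~\eqref{eq:dt_rt}, evaluated on the unique boundary of Theorem~\ref{thm:main} with the transverse normalization $a_\perp = R_0$ of Eq.~\eqref{eq:transverse}. First I will substitute $r(\theta) = R_0\sqrt{1-\beta^2}/\sqrt{1-\beta^2\sin^2\theta}$ into $\Delta t_{\mathrm{rt}}(\theta) = 2r(\theta)\sqrt{1-\beta^2\sin^2\theta}/[c(1-\beta^2)]$. The factors $\sqrt{1-\beta^2\sin^2\theta}$ cancel identically, which leaves $\Delta t_{\mathrm{rt}} = 2R_0/(c\sqrt{1-\beta^2}) = \gamma\,(2R_0/c)$, independent of $\theta$. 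This cancellation is simply the closure condition~\eqref{eq:closure} reappearing, so no separate case analysis in $\theta$ is needed. A quick consistency check at $\theta = 0$ and $\theta = \pi/2$ using Eqs.~\eqref{eq:dt_out}--\eqref{eq:dt_ret} separately confirms the same value.

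Second, I will identify the resonant period of the moving cavity with this common round-trip time. The justification comes from Sec.~\ref{sec:stationary}: in the eikonal picture, the mode period at rest is tied to the diametral round trip $T_0 = 2R_0/c$, and the same identification is used for the moving cavity. Because the round-trip time is now angle-independent, the period is well defined, with every ray returning in phase. This gives $T = \gamma T_0$. The frequency statement then follows from $\omega = 2\pi/T$ (or the same proportionality for any fixed mode index), which yields $\omega = \omega_0/\gamma$.

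The main obstacle is not the algebra but the identification in the second step. One must argue that the oscillation period measured in the medium frame is the round-trip time $\Delta t_{\mathrm{rt}}$, with the same proportionality constant linking period to round trip as in the rest case. I would handle this by noting two things. First, the medium-frame wave speed and the wavenumber-to-frequency relation $\omega = ck$ are unchanged. Second, the closure phase $\Phi_*$ is fixed mode by mode, so the same mode index corresponds to the same closure phase $\omega\,\Delta t_{\mathrm{rt}}$. Consequently $\omega\,\gamma T_0 = \omega_0 T_0$, which gives the frequency relation directly.
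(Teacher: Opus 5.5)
Your proposal matches the paper's argument. You substitute the closure-preserving boundary with $a_\perp = R_0$ into Eq.~\eqref{eq:dt_rt}, let the $\sqrt{1-\beta^2\sin^2\theta}$ factors cancel to get $T = 2R_0\gamma/c = \gamma T_0$, and read off $\omega = \omega_0/\gamma$. Your added remark that a fixed mode index keeps the closure phase $\omega\,\Delta t_{\mathrm{rt}}$ fixed (so that $k = k_0/\gamma$) usefully justifies identifying the resonant period with the round-trip time, a step the paper takes for granted from the rest-frame relation $T_0 = 2R_0/c$.
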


No additional assumption was introduced beyond those of the theorem and the
transverse normalization.  The period dilation is a direct algebraic
consequence of the same closure condition that forces the contraction.  The
two effects are not independent:
\begin{align}
\label{eq:pair}
L_\parallel &= \frac{L_0}{\gamma}
& &\text{(closure-preserving shape)},
\notag \\[2pt]
T &= \gamma\, T_0
& &\text{(closure-preserving period)}.
\end{align}
Both arise because the wave must chase a moving boundary through a medium in
which it propagates at the fixed isotropic speed~$c$.  The contraction
redistributes the path length so that the increase is uniform across all
angles; the dilation is the magnitude of that uniform increase.

\section{Compatibility with Full Wave Modes}
\label{sec:separability}

The argument in Secs.~\ref{sec:pursuit}--\ref{sec:closure} is formulated in
the eikonal (ray) limit.  It is natural to ask whether the unique boundary
identified by ray phase closure is also compatible with a full wave-equation
mode structure.

The Helmholtz equation $(\nabla^2 + k^2)\Psi = 0$ separates in oblate
spheroidal coordinates $(\xi, \eta, \phi)$, where the coordinate
surfaces of constant~$\xi$ are confocal oblate ellipsoids of revolution.  The
boundary~\eqref{eq:r_unique} is precisely such a surface: it is an oblate
ellipsoid with the symmetry axis along the direction of motion.  For suitable
focal parameter~$d$ and level value~$\xi_0$, the constant-$\xi_0$ surface has
semiaxes $a_\perp = R_0$ and $a_\parallel = R_0/\gamma$, reproducing
Eq.~\eqref{eq:semiaxes}.

In these coordinates, the Helmholtz equation separates into three ODEs.  The
angular equation in~$\eta$ yields oblate spheroidal harmonics, which reduce
continuously to the ordinary spherical harmonics $Y_\ell^m$ in the limit
$\beta \to 0$ ($d \to 0$).  The modes of the contracted cavity are therefore
labeled by the same quantum numbers $(\ell, m, n)$ as those of the sphere at
rest, with continuous spectral deformation as~$\beta$ increases.

We do not claim that separability alone uniquely selects this surface;
several families of axially symmetric coordinate surfaces admit Helmholtz
separation.  The uniqueness follows from the phase-closure
theorem (Sec.~\ref{sec:theorem}).  What the separability check confirms is that
the closure-derived boundary is not merely an eikonal
artifact: it belongs to a standard separable coordinate family for the full
scalar wave equation, and the resulting modal structure is a continuous
deformation of the spherical-harmonic eigenbasis at rest.

\section{Cavity-Based Metrology and the Appearance of Lorentzian Kinematics}
\label{sec:metrology}

The closure theorem establishes two physical effects in the medium frame:
the cavity contracts longitudinally by $1/\gamma$, and its resonant period
dilates by~$\gamma$.  These are the complete mechanical consequences of the
theorem.

It is useful to distinguish three layers of description:
\begin{enumerate}
\item \emph{The medium-frame effects are physical:}
\[
a_\parallel = a_\perp/\gamma, \qquad T = \gamma\,T_0.
\]
These are real deformations and real period changes of the resonant
eigenstructure, determined by the closure condition in the frame where
$c$~is isotropic.

\item \emph{The operational Lorentz transformation is metrological:}
it is the coordinate bookkeeping generated when all measuring
instruments---rulers, clocks, synchronization signals---are themselves
built from structures obeying the same closure-derived deformation laws.

\item \emph{The covariant description is therefore not false; it is emergent.}
It follows necessarily from the first two layers, but it is not a
primitive postulate about spacetime.
\end{enumerate}

We now trace how each layer arises.

\subsection{Spatial measurements}

In the medium frame, the cavity boundary satisfies
$\gamma^2 x^2 + y^2 + z^2 = R_0^2$.  An observer who defines spatial
intervals using the cavity's own resonant modes---its standing-wave nodes
as length standards---implicitly works in coordinates
\begin{equation}
\label{eq:internal}
x' = \gamma x, \qquad y' = y, \qquad z' = z,
\end{equation}
in which the boundary is spherical: $x'^2 + y'^2 + z'^2 = R_0^2$.  This
observer reports an isotropic cavity.  The contraction is invisible to
instruments that are themselves contracted by the same factor.

\subsection{Temporal measurements}

A cavity-based clock ticks at period $T = \gamma T_0$.  The slowing is
uniform across all modes (Corollary~\ref{cor:dilation}), so no internal
comparison between cavity oscillations can reveal it.  The dilation is
detectable only from the medium frame, where $c$ is isotropic and clocks at
rest tick at $T_0$.

\subsection{The operational content of Lorentzian kinematics}

Suppose a moving observer synchronizes spatially separated cavity-based clocks
using signals that also propagate at speed~$c$ through the medium.  Because
the observer is moving through the medium at speed~$v$, a synchronization
signal sent between two clocks separated by~$\Delta x$ in the medium frame
acquires a directional travel-time asymmetry.  When this asymmetry is absorbed
into the clock-synchronization convention, it produces a position-dependent
offset that takes the form $-v\Delta x/c^2$ in the resulting operational time
coordinate.

The full coordinate mapping between the medium frame and the moving observer's
operationally defined frame then takes the form of a Lorentz boost.  But its
origin is now transparent: the spatial part is the physical contraction of
cavity-based rulers; the temporal part is the physical dilation of
cavity-based clocks; and the cross-term $-vx/c^2$ is a synchronization
artifact---the accumulated phase lag between spatially separated oscillators
that are all cycling more slowly and being synchronized by signals traveling
at $c$ through a medium they are moving through.

In this reading, Lorentz covariance is the self-consistency condition of
cavity-based metrology.  It is the statement that all instruments built from
resonant structures subject to the same closure constraint will generate
mutually consistent measurements that are indistinguishable from what
a ``stationary'' set of instruments would report in a boosted coordinate
frame.  The medium frame remains physically distinguished as the frame in
which $c$ is isotropic and cavity shapes are determined by the closure
theorem.  The covariant description need not be interpreted as a primitive
spacetime postulate; in the present construction, it arises as the internally
consistent measurement algebra of resonant instruments subject to the same
closure constraint.

This is not a deficiency.  It is an explanation.  The question
``why is physics Lorentz-covariant?'' receives a mechanical answer: because
the structures doing the measuring are wave eigenstructures in a mechanical
wave medium, and the closure condition that preserves their coherence is the same
condition that generates the Lorentz deformation.  Covariance is the
necessary self-consistency of resonant metrology, not an independent
postulate about the geometry of spacetime.

\section{Spherical Harmonics Beyond Classical Acoustics}
\label{sec:harmonics}

The spherical harmonics $Y_\ell^m$ appear in the present argument as the
angular eigenbasis of a classical resonant cavity.  The same functions
appear wherever a scalar or spinor wave equation separates in spherical
coordinates: in the Schr\"odinger equation for central potentials, in the
multipole expansion of electromagnetic fields, and in the partial-wave
analysis of scattering problems.  In each case, the angular quantum
numbers $\ell$ and $m$ depend on the spherical symmetry of the closure
geometry.

The theorem proved here is a statement about spherical-harmonic phase closure
in a mechanical wave medium.  Its mathematical content applies to any system
in which the following conditions are met:
\begin{enumerate}
\item[(a)] the internal structure is organized by resonant standing waves;
\item[(b)] the angular eigenbasis is the spherical harmonics $Y_\ell^m$;
\item[(c)] the waves propagate isotropically at speed~$c$ in a
           medium rest frame;
\item[(d)] the structure moves uniformly through that medium.
\end{enumerate}

In any such system, preservation of the angular quantum numbers $(\ell, m)$
under uniform motion is subject to the same closure constraint derived here,
and the closure-preserving deformation is the same Lorentzian ellipsoid.

This is a conditional statement, not a claim about the ontology of any
particular physical system.  But it is a conditional with significant reach.
It identifies a structural mechanism---phase closure of spherical-harmonic
wave organization in a mechanical wave medium---that produces Lorentzian
kinematics as a necessary mechanical consequence.  Any framework in which
material structure is constituted by such wave organization inherits the
contraction and dilation derived here, not as postulates but as theorems.

\section{Scope and Limitations}
\label{sec:scope}

The theorem depends on the following conditions:
\begin{enumerate}
\item nondispersive waves with $\omega = ck$;
\item isotropic propagation speed $c$ in the medium rest frame;
\item uniform translational motion of the cavity;
\item axially symmetric boundary described by a radial function $r(\theta)$
      (inherited from a continuously deformed sphere in a transversely
      isotropic medium);
\item phase closure treated in the eikonal limit
      (confirmed by separability in Sec.~\ref{sec:separability});
\item preservation of angle-independent round-trip phase
      (the spherical-harmonic closure condition).
\end{enumerate}

The transverse scale convention $a_\perp = R_0$ is an additional physical
input, justified by the isotropy of the transverse medium plane
(Sec.~\ref{sec:theorem}).  The Lorentzian aspect ratio $a_\parallel / a_\perp
= 1/\gamma$ is independent of this convention.

A physical cavity may fail to satisfy these assumptions.  It may radiate,
shed vortices, deform non-ellipsoidally, excite non-spherical modes, or lose
coherence.  Such failures lie outside the scope of the theorem and do not
contradict it.

The theorem asserts:
\begin{quote}
\emph{If a moving cavity in a mechanical wave medium preserves
spherical-harmonic phase closure, then its medium-frame boundary must have the
Lorentzian aspect ratio $a_\parallel / a_\perp = 1/\gamma$, and its resonant
period must dilate by~$\gamma$.}
\end{quote}

This is a necessary condition for eigenstructure preservation, established as a
uniqueness result.

\section{Relation to the Constructive Program}
\label{sec:constructive}

Bell argued in 1976 that the Lorentz contraction of a moving rod could in
principle be derived from the electromagnetic dynamics of its constituent
charges, without invoking spacetime postulates~\cite{Bell1976}.  His argument
was force-specific: it depended on the detailed form of the electromagnetic
interaction.  The present theorem identifies a more general version of the
same program.  The contraction follows from phase closure alone, independently
of the specific forces binding the cavity, whenever the internal structure is a
spherical-harmonic eigenstructure in a mechanical wave medium.  The binding
mechanism enters only insofar as it must maintain the resonant boundary; the
Lorentzian deformation law is determined entirely by the closure condition.

The oblate spheroidal geometry that appears here was already present in the
pre-relativistic literature.  Heaviside showed in 1889--1892 that the
equilibrium surfaces of a charge moving through the electromagnetic medium are
oblate spheroids with the same aspect ratio~\cite{Heaviside1892}.  These
results were
obtained from electromagnetic field equations in a medium, not from spacetime
postulates.  The present theorem abstracts the essential mechanism: it is not
the electromagnetic field equations specifically, but the phase-closure
condition for any mechanical wave system, that forces the Lorentzian shape.

The constructive program was not shown to be wrong.  It was set aside in favor
of a postulational framework that made the medium-based derivation appear
unnecessary.  The uniqueness theorem proved here---that no other deformation
preserves spherical-harmonic closure---is the result the constructive program
needed and never produced.

\section{Conclusion}
\label{sec:conclusion}

We have proved that the Lorentzian aspect ratio is the unique deformation
law for a moving resonant cavity that preserves spherical-harmonic phase
closure in a mechanical wave medium.  Starting from the pursuit geometry of a
wave chasing a moving boundary, we derived the round-trip phase
\begin{equation}
\Phi(\theta)
= \frac{2k\,r(\theta)\sqrt{1 - \beta^2\sin^2\theta}}{1 - \beta^2}
\end{equation}
and showed that the closure condition $\Phi(\theta) = \text{const}$
uniquely determines
\begin{equation}
r(\theta)
= \frac{a_\perp\sqrt{1 - \beta^2}}
       {\sqrt{1 - \beta^2\sin^2\theta}},
\end{equation}
forcing
\begin{equation}
\frac{a_\parallel}{a_\perp} = \frac{1}{\gamma}.
\end{equation}

With transverse scale invariance ($a_\perp = R_0$), this becomes the
Lorentz--FitzGerald contraction $a_\parallel = R_0/\gamma$.

The same closure condition yields the resonant period
$T = \gamma T_0$, without additional assumptions.  The contraction and the
dilation are not separate postulates.  They are paired consequences of one
mechanical requirement: a moving cavity must preserve its angular phase
closure in order to remain the same spherical-harmonic resonant
eigenstructure.

The contracted boundary belongs to a separable coordinate family for the
Helmholtz equation, confirming that the eikonal result extends to the full
wave problem.  Observers confined to cavity-based instruments find the
Lorentz boost as the self-consistency condition of their own metrology:
contracted rulers, dilated clocks, and synchronization offsets that together
reproduce covariant kinematics without requiring any independent postulate
about the geometry of spacetime.  In particular, a moving observer whose
rulers are contracted by~$1/\gamma$, whose clocks are dilated by~$\gamma$,
and whose synchronization convention absorbs the directional travel-time
asymmetry, will measure the wave speed as~$c$ in every direction---not because
the medium has no rest frame, but because the observer's instruments are
mechanically incapable of detecting their own deformation.  The apparent
constancy of~$c$ is thus not an independent fact about nature; it is the
inevitable measurement outcome for any observer whose instruments are
themselves closure-preserving wave structures in the same medium.

The result may be summarized as
\begin{equation}
\boxed{
\begin{aligned}
&\text{spherical-harmonic closure}\\
&\quad+\;\text{mechanical wave medium}\\
&\quad\Longrightarrow\;
\text{Lorentzian deformation law}.
\end{aligned}
}
\end{equation}

This is a theorem about geometric necessities of moving resonant
eigenstructures.  The constructive
program does not compete with Lorentz covariance as a formal symmetry.  The
closure condition that constitutes a resonant eigenstructure is already,
identically, the deformation law that makes its measurements covariant.

We can now simply see an inevitable causal mechanism for constancy.

\end{document}